\newcommand{\violet}{\color{violet}}
\theoremstyle{plain}
\newtheorem{theorem}{Theorem}[section]
\newtheorem{lemma}[theorem]{Lemma}
\theoremstyle{definition}
\newtheorem{definition}[theorem]{Definition}
\newtheorem{remark}[theorem]{Remark}
\newcommand{\violet}{\color{violet}}
\newcommand{\cH}{\mathcal{H}}
\newcommand{\cW}{\mathcal{W}}
\newcommand{\cS}{\mathcal{S}}
\newcommand{\Ws}{\cW_{\text{succ}}}
\newcommand{\Wf}{\cW_{\text{fail}}}
\newcommand{\Ss}{\cS_{\text{succ}}}
\newcommand{\Sf}{\cS_{\text{fail}}}
\newcommand{\brac}[1]{\left(#1\right)}
\newcommand{\OO}{\mathcal{O}}
\newcommand{\cH}{\mathcal{H}}
\newcommand{\cI}{\mathcal{I}}
\newcommand{\cE}{\mathcal{E}}
\newcommand{\cW}{\mathcal{W}}
\newcommand{\cS}{\mathcal{S}}
\newcommand{\Ws}{\cW_{\text{succ}}}
\newcommand{\Wf}{\cW_{\text{fail}}}
\newcommand{\Ss}{\cS_{\text{succ}}}
\newcommand{\Sf}{\cS_{\text{fail}}}
\newcommand{\Pos}{\mathrm{Poisson}}
\newcommand{\Gam}{\mathrm{Gamma}}
\newcommand{\T}{\mathrm{T}}
\newcommand{\F}{\mathrm{F}}
\title{The paradoxical nature of easily improvable evidence}
\author{Maria Chikina\footnote{Department of Computational and Systems Biology, University of Pittsburgh, Pittsburgh, PA 15260}, Wesley Pegden\footnote{Department of Mathematical Sciences, Carnegie Mellon University, Pittsburgh, PA 15213.  Correspondence to wes@math.cmu.edu}}
\begin{document}

\maketitle

\begin{abstract}
    Established frameworks to understand problems with reproducibility in science begin with the relationship between our understanding of the prior probability of a claim and the statistical certainty that should be demanded of it, and explore the ways in which independent investigations, biases in study design and publication bias interact with these considerations.
    
    We propose a complementary perspective; namely, that to improve reproducibility in science, our interpretation of the persuasiveness of evidence (e.g., statistical significance thresholds) should be responsive to our understanding of the effort that would be required to improve that evidence.  We will quantify this notion in some formal settings.  Indeed, we will demonstrate that even simplistic models of evidence publication can exhibit an \emph{improvable evidence paradox}, where the publication of easily improvable evidence in favor of a claim can best seen as evidence the claim is false. 
\end{abstract}
\section{Introduction}
In the face of widespread concern about the reliability of published scientific research, large-scale replication studies have found low reproducibility across a variety of scientific fields \cite{manylabs2,economics,open2015estimating,bayer,amgen}.  Some causes such as small sample size, small effect sizes, and  $p$-hacking can be broadly characterized as lack of statistical rigour.   Another important contributor is publication bias, the practice of only publishing positive results. Publication bias can obscure the fact that multiple attempts to demonstrate a hypothesis were made and thus inflate our confidence in positive findings. 
Practices such as accepting papers based on the experimental design alone (irrespective of the results or even prior to their obtainment) have gained momentum but have not been widely adopted \cite{SterneDavey,manifesto}. Thus, publication bias remains a fundamental property of the scientific process in most fields \cite{glenjohn}.   


Established frameworks to understand the predictive value of positive research findings  leverage the relationship between the prior probability of the hypothesis in question, the bias which may be at work in investigations, and the extent to which the work of multiple independent investigators may be shielded from view by publication bias \cite{whymost}.   These frameworks demonstrate that in the face of these factors, the observation of successful studies of a hypothesis can sometimes have only a moderate effect on the posterior probability that a hypothesis $H$ is true.  In particular, when the prior probability of $H$ is low, $H$ can remain unlikely to be true even after successful studies are observed, as supported by the high failure rates of replication attempts.  Apart from approaches to address publication bias directly, proposals to address these challenges focus on improving the statistical power of studies, emphasizing the importance of effect size and study design above the arbitrary division of results into ``significant'' and ``non-significant'' based on moderate significance thresholds, and reducing the potential for design and investigator bias \cite{manifesto,button2013power,ioannidis2017power,SterneDavey,redefine}.

We propose a complementary idea, which is that the false-positive and false-negative rates (Type I and Type II errors) of studies should be evaluated not just in the context of the prior probability of a hypothesis
and the level of confidence we desire in its truth, 
but also  the extent to which improving the statistical power of the the study protocol would be difficult or costly.  The reason for this is not merely pragmatic (though, certainly, it can be sensible to pursue better confidence in findings when that confidence is cheap).  Instead, we will show that in a framework where the studies  conducted to support a hypothesis do not all have identical Type I and Type II error rates but are  instead heterogeneous, the posterior confidence in the hypothesis depends not only on the error rates of the successful studies we observe, but also on the rate at which we would expect stronger evidence for the hypothesis to be generated.  
In particular, we will prove in a simple and rigorous framework that when our prior on the interest-level in a scientific hypothesis is sufficiently weak, we can encounter an \emph{improvable evidence paradox}: \textbf{The more easily-improvable evidence in favor of the hypothesis we observe, the more we should become convinced that the hypothesis is false.}

This phenomenon provides a new lens through which to understand the limitations of meta-analyses which infer ever-increasing confidence in an effect with each additional observation of a small, barely significant study.  Specific examples where such meta-analyses fail to replicate in subsequent large-scale trials are given in \cite{daveyeditorial,funnel}.  In particular, our results suggest that meta-analysis approaches which integrate individual pieces of weak evidence without regard to their improvability should have a tendency to overstate confidence in effects.

\section{A motivating illustration}
In this section we give an informal motivating example, to give some intuition for why it is reasonable to suspect that easily improvable evidence can sometimes have paradoxical effects.

We consider the case of a village full of curious investigators, living on the outskirts of a forest teeming with unknown animal life.  From stories and legends, the villagers have a list of dozens of hypothetical animal inhabitants; for each item on the list, the villagers wonder whether an animal with that description truly lives in the forest.  

Suppose the villagers can collect two types of evidence---weak and strong---for the existence of an animal from the list in the forest.  Weak evidence might consist of a blurry photograph that seems to depict the animal, taken with a low-tech camera, while strong evidence might consist of clear, high-resolution photographs or video.

In such a situation, we might expect that the villagers' imagination is captured more by some of the hypothetical animals than by others.  For those they are most excited about, many villagers go in search of evidence the animal exists, while the most boring animals on their list go completely uninvestigated.

With this setup in mind, suppose now that we visit this village, and visit a museum that displays all the evidence they've collected about these animals.  We'll see that for some animals, villagers will have collected lots of evidence, both strong and weak, that the animal exists.  These hypothetical animals have two properties; they are likely real \emph{and} the villagers are very interested them, having investigated their existence extensively.

There may be hypothetical animals for which the museum contains no evidence at all.  These animals may not exist, or, alternatively, villagers may simply not have been interested enough in them to try to collect evidence of their existence.  If half of all the animals on the villagers' list actually exist in the forest, it may well be the case that nearly half of the animals completely unrepresented in the museum actually exist as well.

On the other hand, when we encounter animals for which the museum contains plenty of weak evidence, but no strong evidence, we are in a different situation.  We know the villagers are interested enough in these hypothetical animals to have collected a lot of evidence of their existence.  But they have not managed to collect any strong evidence; only weak evidence.  The most parsimonious explanation is that these animals are unlikely to actually inhabit the forest---in particular, they are less likely to inhabit the forest than the animals for which the museum contains no evidence at all!  Every  additional blurry photograph we see of ``Bigfoot'' isn't just weak evidence of its existence; it's yet another reminder that there is considerable interest in this specimen, and yet no solid documentation. 

We will study this phenomenon in a simple and rigorous model of evidence generation.  We will see that paradoxical effects of easily improvable evidence do not depend on our prior probability that a hypothesis is true (e.g., the background rate at which animals on the villagers' list actually exist) but really on the improvability of the evidence, as captured by the rate at which evidence with more favorable Type I and Type II error rates should be sought.  We will see that the paradoxical nature of improvable evidence arises even when we assume that evidence generation is unbiased, and even if we assume that decisions to seek strong and weak evidence are made independently given the interest level in the hypothesis.  We will also see that even the first observation of easily improvable evidence can be paradoxical; in particular, we will prove in a natural model that for sufficiently overdispersed priors on the interest-level in a hypothesis, every additional observation of weak evidence reduces the conditional probability that the hypothesis in question is true.


\section{Positive predictive value of evidence}

Before introducing our setup, let us recall that the positive predictive value (PPV) of the observation of evidence---denoted as event $\cE$---in favor of the truth of a hypothesis---an event $\cH$---is simply the conditional probability $\Pr(\cH\mid \cE)$, which we can write as
\begin{equation}\label{eq.oneW}
\Pr(\cH\mid \cE)=\Pr(\cE\mid \cH) \frac{\Pr(\cH)}{\Pr(\cE)}
=\frac{\Pr(\cH)}{\Pr(\cH)+(1-\Pr(\cH))/C_{\cH,\cE}}
\end{equation}
for the likelihood ratio
\begin{equation}\label{eq.CHE}
C_{\cH,\cE}:=\frac{\Pr(\cE\mid \cH)}{\Pr(\cE\mid \neg \cH)}.
\end{equation}
In particular, in a standard framework where $\cE$ is the success of a single study with false-positive and false-negative rates $\alpha$ and $(1-\gamma)$---alternatively with \emph{power} $\gamma$---we have
$
C_{\cH,\cE}={\gamma}/{\alpha}.
$
For example, for a study that is underpowered by conventional standards, with Type I error $\alpha=.05$ and power $\gamma=.2$, we have $C_{\cH,\cE}=4$ and  $\Pr(\cH\mid \cE)$ would be 50\% if the prior $\Pr(\cH)$ is $20\%$, and approximately $31\%$ for a prior of $10\%$.  

When publication bias shields from view all but positive studies for a hypothesis, a simple framework to understand the effect on PPV is to consider events %
%
%
%
$    \OO_{\geq j}=\{\text{Observe at least $j$ successes}\}
$
(with the lens of publication bias concealing any negative results).
If $n$ studies were attempted, we would then have
\begin{equation}\label{eq.lambdan}
C_{\cH,\OO_{\geq 1}}=\frac{1-(1-\gamma)^n}{1-(1-\alpha)^n}\quad 
\end{equation}
For example, considering again the case $\gamma=.2$, Type I error $\alpha=.05$, where now we expect 5 independent such studies have been done to test the hypothesis, we would have $\Pr(\cH\mid \OO_{\geq 1})\approx 43\%$ for a prior of $\Pr(\cH)=20\%$ and $\approx 24\%$ for a prior of $10\%$.

While we see that role of publication bias can make PPV's weak, this classical framework for understanding the predictive value of scientific evidence does at least give a simple condition which ensures that each observation of successful evidence can only increase the conditional probability of $\cH$; in particular, we have in these cases that $C_{\cH,\cE}>1$ and thus that  $\Pr(\cH\mid \cE)$ is always greater than the unconditioned prior $\Pr(\cH)$, so long as the inequality $\gamma>\alpha$ holds---in particular, whenever a study is more likely to succeed when the hypothesis is true than when it is false.  If we define the events $\OO_j=\{\text{Observe exactly $j$ successes}\}$, it is easy to verify that $C_{\cH,\OO_j}$ is monotone increasing in $j$ so long as $\gamma>\alpha$.

A key simplifying feature of this framework as presented above is that all studies are assumed have identical designs, or at least identical false-positive and false-negative rates.   When we generalize this framework to the setting where evidence of different strength can be sought for a hypothesis, the impact of positive results on our credence for the truth of $\cH$ no longer depends simply on the ratio $\gamma/\alpha$. Instead, the conditional probability of H also depends on how easy it should be to \emph{improve} on the evidence we've observed.
In particular, we demonstrate in a simple and rigorous framework that in the presence of heterogeneity of the strength of evidence that can be generated for a hypothesis, observing more evidence can sometimes actually decrease the conditional probability $\cH$ holds, so long as better evidence should be easily obtainable.

\bigskip Roughly speaking, what's going is this: When a variety of studies can be done to test a scientific hypothesis and publication is biased towards only successful studies, observing successful outcomes imparts not only information about the truth of the hypothesis, but also the level of scientific interest in the hypothesis as well.  Observing publication of easily improvable evidence thus influences our sense of the truth of $H$ in counteracting ways:
\begin{enumerate}
    \item On the one hand, any attempts to generate the evidence would be more likely to succeed if $H$ is true;
    \item On the other, generating the evidence was more likely to be attempted if there is considerable interest in the hypothesis $H$, in which case the fact that we haven't observed stronger evidence of $H$ is also informative.
\end{enumerate}
What we show in this paper is that even in simple models of observational frameworks, the second factor can sometimes dominate, for observations of easily improvable evidence.

In practice, the improvability of evidence varies greatly across research questions.  Studies on humans requiring invasive measurements or treatments present clear costs with increasing sample size; at the other end of the spectrum, the marginal cost of increasing the study population for survey-based studies (perhaps conducted entirely online, for example) in psychology or economics is much lower.  Most scientific research falls between these two extremes, but as Type I and Type II error rates decay exponentially with sample size, the improvability of statistical reliability is often underestimated \cite{redefine}.

\section{Formal setting and the main result}

We will define a simple formal model of an \emph{observational framework}, governing the generation and observation of weak and strong evidence.  Within such a model, attempts to generate strong and weak evidence in favor of a hypothesis will be generated; we are analyzing such a system from the standpoint of an observer who only learns of successful outcomes (i.e., through the lens of publication bias).  Within such a model, we can define
\begin{equation}\label{eq.Ojk}
    \OO_{j,k}=\text{Observe exactly $j$ weak successes and $k$ strong successes}.\\
\end{equation}
We are then interested in understanding, based on observations of successful experiments, what we should conclude about the probability of the truth of the hypothesis.
\begin{definition}
An observation framework exhibits an \emph{improvable evidence paradox} whenever, for all $j,k\geq 0$,
\begin{equation}\label{eq.oparadox}
    \Pr(\cH\mid \OO_{j+1,k})<\Pr(\cH\mid \OO_{j,k}).
\end{equation}
\end{definition}

In particular, we have an improvable evidence paradox whenever an increase in observed successes of weak experiments decreases the conditional probability that the hypothesis $H$ is true.

\bigskip

For our model, we will consider the example of a true-or-false hypothesis $H$ which may be tested in experiments, at a rate that will be influenced by a \emph{level of interest} $\cI>0$.  We consider a simple model that is already sufficient to observe the improvable evidence paradox, in which:
\begin{itemize}
    \item Two experimental outcomes are possible: success or failure.
    \item $H$ can be tested in experiments of two types: strong and weak
        \begin{itemize}
     \item Success of each weak experiment on false and true hypotheses has probability $\alpha_w$ and $\gamma_w$, respectively (Type I and Type II error),
     \item Success of each strong experiment on false and true hypotheses has probability $\alpha_S$ and $\gamma_S$, respectively,
     \end{itemize}
\end{itemize}

The probability space for our main result is constructed in the following way, with each step performed independently of previous steps except where specified:
\begin{itemize}
    \item An interest level $\cI$ for the hypothesis $H$ is chosen from a specified prior distribution,
     \item The number of weak and strong experiments that will be conducted is chosen independently from Poisson distributions of rates $c_w\cI$ and $\cI$, respectively for some constant $c_w>0$,
     \item Whether the event $\cH$ (``$H$ is true'') occurs is chosen by a coin flip with a probability $\Pr(\cH)$,
     \item If $H$ is true, the success of each weak and strong experiment is determined by the outcomes of independent coin flips with success probabilities $\gamma_w$ and $\gamma_S$, respectively.  If $H$ is not true, the outcomes are decided by independent coin flips with success probabilities $\alpha_w$ and $\alpha_S$, respectively.
\end{itemize}
\noindent We will assume 
\begin{equation}\label{eq:stronger}
\frac{\gamma_S}{\alpha_S}>\frac{\gamma_w}{\alpha_w}>1;
\end{equation}
in particular, given two types of studies or evidence, the first inequality is what determines which we consider ``strong'' vs ``weak''.

\bigskip 

We call this probability space determining $\cH$, $\cI$, and the number of weak and strong experiments that are conducted and that succeed an \emph{interest-$\cI$ observation framework}.

\begin{theorem}\label{t.main}
Any interest-$\cI$ observation framework for which the interest level $\cI$ is gamma-distributed with rate $\beta$ satisfying
\begin{equation}\label{eq.main}
    \beta<\frac{\gamma_S\alpha_w-\gamma_w\alpha_S}{\gamma_w-\alpha_w},
\end{equation}
exhibits an improvable evidence paradox.
\end{theorem}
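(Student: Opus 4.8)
The plan is to reduce everything to the likelihood ratio $C_{j,k}:=\Pr(\OO_{j,k}\mid\cH)/\Pr(\OO_{j,k}\mid\neg\cH)$. By \eqref{eq.oneW}, $\Pr(\cH\mid\OO_{j,k})$ is a strictly increasing function of $C_{j,k}$ for any fixed value of $\Pr(\cH)$, so the defining inequality \eqref{eq.oparadox} is equivalent to $C_{j+1,k}<C_{j,k}$ for all $j,k\ge 0$; in particular the conclusion will not depend on $\Pr(\cH)$.

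First I would compute $\Pr(\OO_{j,k}\mid\cH)$ and $\Pr(\OO_{j,k}\mid\neg\cH)$ in closed form. Conditioning on the interest level $\cI$, Poisson thinning shows that the counts of successful weak and successful strong experiments are independent, distributed as $\Pos(c_w\cI\gamma_w)$ and $\Pos(\cI\gamma_S)$ when $H$ is true, and as $\Pos(c_w\cI\alpha_w)$ and $\Pos(\cI\alpha_S)$ when $H$ is false. Writing the $\Gam$ density $x\mapsto \beta^a x^{a-1}e^{-\beta x}/\Gamma(a)$ for $\cI$ (the shape $a$ will turn out to be irrelevant) and integrating term by term against the standard identity $\int_0^\infty x^{j+k+a-1}e^{-\lambda x}\,dx=\Gamma(j+k+a)\lambda^{-(j+k+a)}$ gives
\[
\Pr(\OO_{j,k}\mid\cH)=\frac{(c_w\gamma_w)^j\,\gamma_S^k}{j!\,k!}\cdot\frac{\Gamma(j+k+a)\,\beta^a}{\Gamma(a)\,(\beta+c_w\gamma_w+\gamma_S)^{j+k+a}},
\]
together with the analogous formula for $\neg\cH$ obtained by replacing $(\gamma_w,\gamma_S)$ with $(\alpha_w,\alpha_S)$.

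Taking the ratio, the factors $\Gamma(j+k+a)/(j!\,k!)$ and $\beta^a/\Gamma(a)$ cancel, leaving
\[
C_{j,k}=\left(\frac{\gamma_w}{\alpha_w}\right)^{j}\left(\frac{\gamma_S}{\alpha_S}\right)^{k}\left(\frac{\beta+c_w\alpha_w+\alpha_S}{\beta+c_w\gamma_w+\gamma_S}\right)^{j+k+a}.
\]
Hence $C_{j+1,k}/C_{j,k}=\frac{\gamma_w}{\alpha_w}\cdot\frac{\beta+c_w\alpha_w+\alpha_S}{\beta+c_w\gamma_w+\gamma_S}$, and requiring this to be $<1$ rearranges, after the $c_w\gamma_w\alpha_w$ cross-terms cancel on the two sides, to $\beta(\gamma_w-\alpha_w)<\gamma_S\alpha_w-\gamma_w\alpha_S$. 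Since $\gamma_w>\alpha_w$ (a consequence of \eqref{eq:stronger}), dividing by $\gamma_w-\alpha_w$ yields precisely \eqref{eq.main}; and since $\gamma_S/\alpha_S>\gamma_w/\alpha_w$, the right-hand side of \eqref{eq.main} is positive, so the hypothesis is non-vacuous.

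There is no serious obstacle in this argument: the only points requiring care are getting the Poisson-thinning reduction right and carrying the exponent $j+k+a$ faithfully through the Gamma integrals. The one genuinely informative feature of the computation is the cancellation of the $c_w\gamma_w\alpha_w$ terms, which is exactly why the threshold on $\beta$ in \eqref{eq.main} depends neither on the relative rate $c_w$ of weak versus strong experiments nor on the Gamma shape parameter $a$.
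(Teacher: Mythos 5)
Your proposal is correct and follows essentially the same route as the paper's own proof: Poisson thinning conditional on $\cI$, integration of the resulting product against the gamma density via $\int_0^\infty x^{b}e^{-\lambda x}dx=\Gamma(b+1)/\lambda^{b+1}$, and the observation that the $c_w\gamma_w\alpha_w$ cross-terms cancel so the threshold on $\beta$ is independent of $c_w$ and the shape parameter. The closed form for $C_{j,k}$ and the final rearrangement match the paper exactly.
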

\noindent The proof of the Theorem is given in Appendix \ref{s.proof}.

Recall that the rate $\beta$ of a gamma distribution is the ratio of its mean to its variance; thus, an improvable evidence paradox occurs when the variance in our prior for a gamma-distributed interest level of a hypothesis is sufficiently large---as captured by \eqref{eq.main}---relative to its mean.  (Note that condition \eqref{eq:stronger} implies that the righthandside of \eqref{eq.main} is greater than 0.) Informally: When we are sufficiently uncertain about the interest-level in a hypothesis among investigators, funding agencies, and journals, an increase in our estimate of the true interest level in the hypothesis (and thus, an increased expectation that stronger evidence would be published for the hypothesis, if it is indeed true) can be, in a Bayesian sense, the most important impact of observing the publication of easily-improvable evidence with poor Type I or Type II error rates.

For the sake of concreteness, let us return to the example of a hypothesis that may be tested in underpowered studies, with false-positive rate $\alpha=.05$ and power $\gamma=.2$.  As discussed earlier, success of such an underpowered study may have just a modest positive effect on the conditional probability that $\cH$ is true, particularly if several such studies may have been carried out, with publication bias only revealing successes.  But now suppose that strong studies, say with power $\gamma=.9$ and $\alpha=.01$ might also be conducted (at a rate related to the rate of underpowered studies, through the interest-level in the hypothesis).  In this case the improvable evidence paradox occurs whenever the inverse rate of the prior for the interest level is at least
\[
\frac{\gamma_w-\alpha_w}{\gamma_S\alpha_w-\gamma_w\alpha_S}=\frac{.2-.05}{.9\times .05-.4\times .01}\approx 3.7.
\]
In particular, if the variance of our gamma-distributed prior for the interest level for hypotheses in the field of $H$ at is least 3.7 times the mean (for example, this is true in the special case where the prior is exponentially distributed with mean greater than    3.7), then not only is each successful weak experiment we observe not very informative for $H$; each such observation actually makes the truth of the hypothesis less likely.

\bigskip
We note that paradoxical behavior of improvable evidence does not depend on unique properties of the choice of the gamma distribution as our prior for $\cI$. For example, let us say that an observation framework satisfies a  improvable evidence paradox \emph{up to $K$} if \eqref{eq.oparadox} holds for all $j+k\leq K$.  In Section \ref{s.proof} we will also show the following:
\begin{theorem}\label{t.uniform}
A $\cI$ observation framework for which $\cI$ is uniformly distributed in $[0,C]$ exhibits an improvable evidence paradox up to $K$ if $C$ is sufficiently large given $K$ and $\gamma_w, \gamma_S, \alpha_w,\alpha_S$, and $c_w$.
\end{theorem}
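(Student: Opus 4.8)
The plan is to follow the route used for Theorem~\ref{t.main}: reduce the paradox to a monotonicity statement about a likelihood ratio, and then exploit the fact that a uniform prior on $[0,C]$ approximates an (unnormalized) Lebesgue measure on $[0,\infty)$ as $C\to\infty$, which is exactly the degenerate ``$\beta=0$'' instance of the gamma computation.

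First I would invoke \eqref{eq.oneW}: since $\Pr(\cH\mid\OO_{j,k})$ is a strictly increasing function of the likelihood ratio $\Lambda_{j,k}:=\Pr(\OO_{j,k}\mid\cH)/\Pr(\OO_{j,k}\mid\neg\cH)$, the inequality \eqref{eq.oparadox} for a fixed pair $(j,k)$ is equivalent to $\Lambda_{j+1,k}<\Lambda_{j,k}$. Next, conditioning on $\cI=x$ and on the truth value of $H$, the numbers of weak and of strong successes are independent (thinned) Poisson variables, with means $\gamma_w c_w x$ and $\gamma_S x$ under $\cH$ and $\alpha_w c_w x$ and $\alpha_S x$ under $\neg\cH$. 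Integrating against the uniform prior density on $[0,C]$, and writing $g:=\gamma_w c_w+\gamma_S$ and $a:=\alpha_w c_w+\alpha_S$ (so that $g>a$), this yields
\[
\Lambda_{j,k}(C)=\left(\frac{\gamma_w}{\alpha_w}\right)^{j}\left(\frac{\gamma_S}{\alpha_S}\right)^{k}\,\frac{\int_0^C x^{\,j+k}e^{-gx}\,dx}{\int_0^C x^{\,j+k}e^{-ax}\,dx}.
\]

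Now, for each fixed $n$ we have $\int_0^C x^{n}e^{-gx}\,dx\to n!/g^{n+1}$ as $C\to\infty$, so $\Lambda_{j,k}(C)\to\widehat\Lambda_{j,k}:=\left(\gamma_w/\alpha_w\right)^{j}\left(\gamma_S/\alpha_S\right)^{k}\left(a/g\right)^{j+k+1}$, which is precisely the quantity obtained from the gamma prior of rate $\beta$ in the limit $\beta\to 0$. A one-line computation gives $\widehat\Lambda_{j+1,k}/\widehat\Lambda_{j,k}=\gamma_w a/(\alpha_w g)$, and the inequality $\gamma_w a<\alpha_w g$ simplifies (the $c_w$ terms cancel) to $\gamma_w\alpha_S<\alpha_w\gamma_S$, which is exactly hypothesis \eqref{eq:stronger}; hence $\widehat\Lambda_{j+1,k}<\widehat\Lambda_{j,k}$ strictly. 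Since each $\Lambda_{j,k}(C)$ is continuous and positive on $(0,\infty)$, the ratio $\Lambda_{j+1,k}(C)/\Lambda_{j,k}(C)$ is continuous in $C$ with limit strictly below $1$, so there is a threshold $C^{\ast}_{j,k}$ beyond which $\Lambda_{j+1,k}(C)<\Lambda_{j,k}(C)$. Taking $C^{\ast}:=\max\{C^{\ast}_{j,k}: j+k\le K\}$---a maximum over finitely many pairs---produces a bound that works simultaneously for all $(j,k)$ with $j+k\le K$, which is the claim.

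I do not anticipate a serious obstacle, since the content is essentially the ``$\beta=0$'' specialization of Theorem~\ref{t.main}. The one point that requires care---and the reason the statement is only ``up to $K$''---is that the thresholds $C^{\ast}_{j,k}$ cannot be chosen uniform in $(j,k)$: for any fixed $C$, a Laplace-type estimate of $\int_0^C x^{\,j+k}e^{-gx}\,dx$ (which is dominated by its endpoint $x=C$ once $j+k>gC$) shows that $\Lambda_{j+1,k}(C)/\Lambda_{j,k}(C)\to\gamma_w/\alpha_w>1$ as $j+k\to\infty$. Thus for a uniform prior the paradox genuinely fails once sufficiently many successes have been observed, so the truncation to $j+k\le K$ is essential and not an artifact of the argument.
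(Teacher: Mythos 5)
Your proof is correct and follows essentially the same route as the paper's: both compute the likelihood ratio under the uniform prior (the paper writes the truncated integrals as lower incomplete gamma functions $\Gamma_\ell(j+k+1,\,C(\gamma_S+c_w\gamma_w))$, etc.), let $C\to\infty$ so that the ratio tends to its $\beta\to 0$ gamma-prior limit where \eqref{eq:stronger} gives strict inequality, and conclude by taking a threshold valid for the finitely many pairs with $j+k\le K$. Your closing observation that $\Lambda_{j+1,k}(C)/\Lambda_{j,k}(C)\to\gamma_w/\alpha_w>1$ as $j+k\to\infty$ for fixed $C$ is a correct extra point, not in the paper, explaining why the truncation to $j+k\le K$ is genuinely necessary rather than an artifact.
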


\begin{remark}
Throughout this paper we work with Type I and II error rates that are separated depending on whether a hypothesis $H$ is ``true'' or ``false''.  Note that for many hypotheses (e.g., concerning an effect of some medication), this implicitly corresponds to an assumption that a hypothesis is true with some minimum effect size, or else not true at all (i.e., the null hypothesis holds).  Alternatively, one could, for example, model effect sizes as either 0 or normally distributed, and consider the positive predictive value of observations to be the probability, conditioned on those observations, that we are in the second situation.  We eschew this more complicated model here just for the sake of simplicity.  (In practice this extra complication would only increase the paradoxical nature of improvable evidence, by giving an extra way that weak and strong successes are correlated.)
\end{remark}
\section{A more general setting}

\begin{figure}[!p]
\begin{center}
\begin{pdfpic}
    \psset{unit=.7cm}
    \begin{pspicture}(0,0)(10,10)
    \psframe[framearc=.2](0,0)(10,10)
    \psline[linewidth=1.2pt](5,0)(5,10)
    \rput(4,9){\Large $\cH$}
    \rput(6,9){\Large $\cH^C$}

    \psframe[linecolor=red,fillstyle=hlines,hatchwidth=.5pt,hatchcolor=red,linewidth=1.2pt,framearc=.3](1.5,3.25)(5.15,7)
    \psframe[linecolor=red,fillstyle=none,hatchwidth=.5pt,hatchcolor=red,linewidth=1.2pt,framearc=.3](1.5,3.25)(8.15,7)
    \rput(3.5,7){\psframebox*[framesep=0pt,framearc=.2]{\large \red $\Ss$}}
    \rput(6,7){\psframebox*[framesep=0pt,framearc=.2]{\large \red $\Sf$}}
    \rput(4.625,7.4)
    {\large \red $\cS$}}
    
    \psframe[linecolor=violet,fillstyle=vlines,hatchwidth=.5pt,hatchcolor=violet,linewidth=1.2pt,framearc=.3](2,3)(6.0,6.5)
    \psframe[linecolor=violet,fillstyle=none,hatchwidth=.5pt,hatchcolor=violet,linewidth=1.2pt,framearc=.3](2,3)(8.0,6.5)
    \rput(3.3,2.85){\psframebox*[framesep=0pt,framearc=.2]{\large \violet $\Ws$}}
    \rput(6.75,2.85){\psframebox*[framesep=0pt,framearc=.2]{\large \violet $\Wf$}}
    \rput(4.6,2.5)
    {\large \violet $\cW$}}
    
    \end{pspicture}
    \end{pdfpic}\\
    \begin{pdfpic}
        \psset{unit=.5cm}
    \begin{pspicture}(0,0)(10,10)
    \psframe[framearc=.2](0,0)(10,10)
    \psline[linewidth=1.2pt](5,0)(5,10)
    \rput(4,9){\Large $\cH$}
    \rput(6,9){\Large $\cH^C$}
    
    \rput(5.2,2.5){\large \blue ${\Ws\cap\Ss^C}$}
    \psline[linecolor=blue,fillstyle=vlines,hatchwidth=.5pt,hatchcolor=blue,linewidth=1.2pt,linearc=.3](5,3.0)(2,3.0)(2,3.25)%
    \psline[linestyle=blue,fillstyle=vlines,hatchwidth=.5pt,hatchcolor=blue,linewidth=1.2pt](2,3.25)(5,3.25)(5,3.0)
    \psline[linecolor=blue,fillstyle=none,hatchwidth=.5pt,hatchcolor=blue,linewidth=1.2pt,linearc=.3](2,3.25)(5.15,3.25)(5.15,6.5)
    \psline[linecolor=blue,fillstyle=vlines,hatchwidth=.5pt,hatchcolor=blue,linewidth=1.2pt,linearc=.3](5.15,6.5)(6.0,6.5)(6.0,3.0)(5.15,3.0)
    \psline[linecolor=blue,fillstyle=none,hatchwidth=.5pt,hatchcolor=blue,linewidth=1.2pt,linearc=.3](5.15,3.0)(5.0,3.0)
    

    
    \end{pspicture}%
    \hspace{2em}
    \begin{pspicture}(0,0)(10,10)
    \psframe[framearc=.2,fillstyle=vlines,hatchwidth=.5pt,hatchcolor=black](0,0)(10,10)
    \psline[linewidth=1.2pt](5,0)(5,10)
    \rput(4,9){\psframebox*[framearc=.2]{\Large $\cH$}}
    \rput(6,9){\psframebox*[framearc=.2]{\Large $\cH^C$}}
    
    \rput(5.0,.75){\psframebox*[framearc=.2]{\large \black ${\Ws^C\cap\Ss^C}$}}
    \psline[linecolor=black,fillstyle=solid,fillcolor=white,hatchwidth=.5pt,hatchcolor=blue,linewidth=1.2pt,linearc=.3](5.15,6.5)(6.0,6.5)(6.0,3.0)(5.15,3.0)(5.0,3.0)(2,3.0)(2,3.25)%
    \psline[linestyle=black,fillstyle=solid,fillcolor=white,hatchwidth=.5pt,hatchcolor=blue,linewidth=1.2pt,linearc=.3](2,3.25)(1.5,3.25)(1.5,7)(5.15,7)(5.15,6.5)


    
    \end{pspicture}
\end{pdfpic}
\vspace{-1cm}
\end{center}
\caption{\label{fig:WSparadox}\textbf{The improvable evidence paradox.} \emph{Top}: The probability space depicted by the rounded black square determines whether or not a hypothesis $H$ is true (event $\cH$) and whether strong or weak evidence for the hypothesis is sought for the hypothesis (events {\red $\cS=\Ss\cup \Sf$} and {\violet $\cW=\Ws\cup \Wf$}, respectively), and whether the attempts for each are successful (events $\Ss$ and $\Sf$, and $\Ws$ and $\Wf$, respectively).  When the level of interest in different hypotheses is highly variable, so that seeking strong and weak evidence is highly correlated, we can encounter the \emph{improvable evidence} paradox: observing only weak evidence can be worse than observing no evidence at all, even when, considered in isolation, the positive-predictive value of the weak evidence $\Ws$ would exceed the prior probability of $\cH$.  In this diagram, the probability of $\cH$ conditioned on {\blue $\Ws\cap \Ss^C$} (\emph{Bottom Left}) is considerably less than when conditioned on $\Ws^C\cap \Ss^C$ (\emph{Bottom Right}).}
\end{figure}

\interfootnotelinepenalty=10000

Beyond the specific context of an interest-$\cI$ observation framework as defined in the previous section, we can consider a general probability space that determines whether or not a hypothesis is true---represented by an event $\cH$---whether or not strong or weak evidence for the hypothesis is sought---represented by events $\cS$ and $\cW$,  respectively---and whether or not those attempts are successful---events $\Ss$ vs. $\Sf$ and $\Ws$ vs $\Wf$, respectively (Figure \ref{fig:WSparadox}). For simplicity in this general setting, we don't consider the case where we can observe any number of strong or weak successes but instead the simple case where each type of evidence either exists or does not\footnote{For example, for an interest-$\cI$ observation framework as defined in the previous section, the $\cW$ we are considering now could be the event ``at least one weak study is attempted'', while $\Ws$ is the event ``at least one weak study succeeds'' (and $\Wf$: ``they all fail''); and similarly for the events $\cS$ and $\Ss$ for strong studies.}.

%
%

In this setting, observing weak evidence without observing strong evidence corresponds to the event $\Ws\cap \Ss^C$, where $\Ss^C$ denotes the complement of the event $\Ss$; for the sketch of events $\Ws$ and $\Ss$ shown at left, the right-handside of the figure shows the event $\Ws\cap \Ss^C$.

The sketch in Figure \ref{fig:WSparadox} exhibits three characteristics that are relevant to the improvable evidence paradox:
\begin{itemize}
    \item The events $\cS$ and $\cW$ are positively correlated.  For an interest-$\cI$ observation framework, this is because the probability space includes a random interest level, whose shared value determines the rate at which both strong and weak studies are attempted.
    \item The event $\Ws\cap \cH^C$ is large compared to the event $\Ss\cap \cH^C$; this corresponds to a lower false positive rate for strong evidence.
    \item The event $\Ss\cap \cH$ is not small compared to $\Ws$; i.e., the power of the stronger study is not small compared to the weaker study.
\end{itemize}

Together, these three properties of the sketched example can ensure that the event $\Ws\cap \Ss^C$ (shown at right in Figure \ref{fig:WSparadox}) is strongly negatively correlated with $\cH$. In particular, we can have an improvable evidence paradox, in the sense that
\begin{equation}\label{eq.genparadox}
\Pr(\cH\mid \Ws\cap \Ss^C)<\Pr(\cH\mid \Ws^C\cap \Ss^C);
\end{equation}
in other words, $\cH$ is more likely if we have observed no successful attempts to generate evidence, than if we have observed weak successes but not strong successes.

While it is clear that with freedom to define the events $\cH,$ $\cS,$ $\Ss,$ $\Sf,$ $\cW,$ $\Ws,$ and $\Wf$ as we choose, probability spaces exist satisfying \eqref{eq.genparadox} exist, what Theorems \ref{t.main} and \ref{t.uniform} demonstrate is that this paradoxical nature of improvable evidence can occur in a simple and natural model of evidence generation, even when decisions to seek evidence for a hypothesis are made independent of its truth-value, and attempts to collect strong and weak evidence are correlated merely through a shared rate in independent Poisson distributions.

\smallskip



\section{Discussion and scope}
We view the phenomenon we describe here as widely but not universally applicable to interpretation of weak scientific evidence.  In particular, the most straightforward criticism of scientific evidence is often just that it is very weak, because of a lack of statistical power, moderate effective size, biases in study design, a low prior probability of truth of the hypothesis---or all of the above.  Paradoxes about improvable evidence are not needed to motivate skepticism in weak studies.

Moreover, it would be wrongheaded to use our results as a basis to dismiss any evidence in favor of a hypothesis as long as better evidence is imaginable (as is, indeed, essentially always the case).  Indeed, the finiteness of scientific resources itself limits the extent to which extremely overdispersed distributions (e.g., gamma distributions with very small rates $\beta$) are good priors for $\cI$; in particular, \eqref{eq.main} shows when studies are well-powered and have small false-postive rate---in particular, when the ratio $\gamma/\beta$ is large---we should not expect to face an improvable evidence paradox unless we have extremely uniformative priors for the interest level in hypotheses.  Thus, what our formal results show is that the importance of considering the improvability of scientific evidence grows with the false-positive and false-negative rates of the design of the study being evaluated.  In particular, with many scientific fields operating under conventions where arbitrary fixed cutoffs for Type I error rates act as gatekeepers to publishing, it is when considering borderline cases that understanding the relevance of the improvability of evidence is particularly salient.

Concretely, we propose that to improve the reliability of scientific findings, the presentation and evaluation of research results should evaluate the statistical power and false-positive rate of a study design in the context of the the costs and difficulties that would be required to conduct higher-powered studies.  In particular, if presenting the results of a study with a Type I error rate of $\alpha=.05$, providing a discussion/evaluation of the resource or ethics barriers to study designs with, say, $\alpha=.01$ can provide context for an observer/reader/referee to decide whether they are likely being presented with a study with $\alpha=.05$ because there are reasonable barriers to conducting stronger studies, or simply because stronger studies of the same hypothesis tend not to succeed.

\bigskip

While we prove quantitative conditions under which an improvable evidence paradox occurs, these conditions depend on comparisons between the Type I and Type II error rates of studies and the level of overdispersion in our prior for the interest-level in hypothesis, a prior which is not something that we would necessarily expect observers of scientific evidence to quantify explicitly.  As such, we view the most immediate practical implications of our results as qualitative: namely, when considering evidence in favor of a hypothesis, easily improvable evidence merits special skepticism, unless it already has highly favorable Type I and Type II error rates.

It is important to note that our results here depend on the role of publication bias.  Certainly, one could prove similar paradoxical results in a model with imperfect publication bias, where some negative results leak through.  But if we learn the outcome of every study, not just positive ones, then improvable evidence paradoxes as demonstrated here do not occur, and weak evidence in favor of a hypothesis can be interpreted more robustly.

At the same time, we note that the potentially paradoxical nature of improvable evidence applies not just in situations where better evidence could have been generated by the same investigators or with the same methods, but any time the evidence could be easily and significantly improved with some method and by some investigators, as long as we would expect to learn of such attempts when they are successful.  For example, indirect evidence for a hypothesis should be interpreted with particular caution when direct evidence should be easily obtainable (even if by other methods). 

 In Appendix \ref{s.homogeneous}, we show that even when only studies of one type and strength can be published, an improvable evidence paradox still can arise through the observation of weak $p$-values reported in such studies; i.e., when the evidence provided by the study could be easily improved even by an identical study (on the same sample size, \emph{etc}), simply by achieving a better $p$-value.   This urges particular caution when an excess of ``barely-significant'' studies support a hypothesis, which has been observed in some meta-analyses \cite{ioannidis2017power}.

We note one piece of related work, which showed that when \emph{biases} of studies or investigators are correlated, too-perfect evidence in favor of a hypothesis---i.e., evidence so consistent that it is inconsistent with the false-negative rate that would be expected from unbiased studies---can actually indicate a hypothesis is less likely to be true\footnote{One motivating example from \cite{biasparadox}: If sufficiently many witnesses identify the same suspect in a lineup without a single misidentification, despite the expected unreliability of memory, it can more likely that some corruption or contamination of the process is at play than that we would observe complete consistency across witness recollections, if each truly had an independent chance of failure.} \cite{biasparadox}.  Our framework, on the other hand, does not hinge on correlated biases, and indeed applies even when we assume studies are completely free of bias; we prove our results in a framework in which each attempted study or form of evidence can succeed or fail independently, depending only on the truth of the hypothesis and the relevant Type I/II error rates. Instead, the key form of correlation we assume is just that the likelihood of different types of studies being done on a given hypothesis is correlated, via the level of interest in the hypothesis.

Finally, we note that the psychology literature includes descriptions of a ``weak evidence effect'' \cite{hypotheses,whengoodgoesbad,disputes}, in which presenting study participants with weak evidence in favor of a hypothesis can make them less convinced the hypothesis is true.  Existing explanations of this phenomenon accept it as an erroneous judgment, inconsistent with Bayesian probability, and hypothesize, for example, that weak evidence acts a distraction from the hypothetical possibility of the existence of stronger evidence \cite{hypotheses}, or injects bias into the generation of hypotheses in the human study participant \cite{whengoodgoesbad}.  Our framework, on the other hand, shows that there are situations where weak evidence can have paradoxical effects on the judgement of a purely rational observer reasoning about well-prescribed hypotheses, even in the absence of any bias beyond reporting (publication) bias.

\section{Proofs}
\label{s.proof}
Recall $\eta\sim\Gam(\kappa,\beta)$ (that is, $\eta$ is gamma distributed with shape $\kappa$ and rate $\beta$) if $\eta\geq 0$ and 
\[
\Pr(a\leq \eta\leq b)=\frac{\beta^\kappa}{\Gamma(\kappa)}\int_{x=a}^b x^{\kappa-1}e^{-\beta x}dx \quad\text{for real $0\leq a\leq b$},
\]
where $\Gamma$ is the Gamma function; for example, when $n$ is a positive integer, $\Gamma(n)=(n-1)!$.  Note that the case $\kappa=1$ corresponds to an exponential distribution with rate $\beta$.  Recall also that $\nu\sim \Pos(\lambda)$---that is, $\nu$ is Poisson with mean $\lambda$---if $\nu$ takes only nonnegative integer values and
\[
\Pr(\nu=k)=e^{-\lambda}\frac{\lambda^k}{k!} \quad\text{for $k=0,1,\dots$}.
\]
The following simple fact is convenient in calculations:
\begin{lemma}\label{l.Posflips}
If $\nu$ is distributed as $\Pos(\lambda),$ the number $\zeta$ of successes in $\nu$ independent Bernoulli's with parameter $p$ is distributed as $\zeta\sim \Pos(p \lambda)$.
\end{lemma}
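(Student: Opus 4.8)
The plan is to prove the lemma by a direct computation of the probability mass function of $\zeta$, summing over the possible values of $\nu$. Conditioning on $\nu=n$, the count $\zeta$ is binomial with parameters $n$ and $p$, so by the law of total probability
\[
\Pr(\zeta=k)=\sum_{n\geq k}\Pr(\nu=n)\binom{n}{k}p^k(1-p)^{n-k}=\sum_{n\geq k}e^{-\lambda}\frac{\lambda^n}{n!}\binom{n}{k}p^k(1-p)^{n-k}.
\]
The remaining work is routine bookkeeping: cancel the $n!$ against the binomial coefficient to leave $1/(k!\,(n-k)!)$, pull the factor $e^{-\lambda}(p\lambda)^k/k!$ outside the sum, reindex with $m=n-k$, and recognize the resulting series as $\sum_{m\geq 0}(\lambda(1-p))^m/m!=e^{\lambda(1-p)}$. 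Multiplying through then gives $\Pr(\zeta=k)=e^{-p\lambda}(p\lambda)^k/k!$, which is exactly the $\Pos(p\lambda)$ mass function, and since this holds for every nonnegative integer $k$ the claim follows.

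As an alternative I would note the even shorter argument via probability generating functions: the generating function of $\nu$ is $s\mapsto e^{\lambda(s-1)}$, and conditioning on $\nu$ shows that the generating function of $\zeta$ is $\E\big[(1-p+ps)^{\nu}\big]=e^{\lambda(ps-p)}=e^{p\lambda(s-1)}$, which identifies $\zeta$ as $\Pos(p\lambda)$ by uniqueness of generating functions. Either route suffices; I would likely present the direct computation for self-containedness and mention the generating-function version as a remark.

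I do not expect any genuine obstacle here: this is the classical ``Poisson thinning'' identity, and the only point requiring a small amount of care is the reindexing of the sum while keeping track of which factors depend on $n$ versus $k$. The reason it is worth isolating as Lemma \ref{l.Posflips} is purely organizational, since it will be invoked repeatedly in the proof of Theorem \ref{t.main} to replace a Poisson-distributed number of independent experiments, each succeeding with a fixed probability, by a Poisson-distributed number of successes with the appropriately scaled rate.
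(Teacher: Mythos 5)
Your direct computation is correct and is essentially identical to the paper's own proof: condition on $\nu=\ell$, cancel the factorials, pull out $e^{-\lambda}(p\lambda)^k/k!$, reindex, and sum the exponential series. The generating-function remark is a fine alternative but not needed; nothing further to add.
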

\begin{proof}
We have 
\begin{multline*}
\Pr(\zeta=k)=
\sum_{\ell\geq k}\binom{\ell}{k}p^k(1-p)^{\ell-k}\Pr(\nu=\ell)=
\sum_{\ell\geq k}\binom{\ell}{k}p^k(1-p)^{\ell-k}e^{-\lambda}\frac{\lambda^\ell}{\ell!}\\
=
e^{-\lambda}\frac{(p\lambda)^k}{k!}\sum_{j\geq 0}\frac{(\lambda-p\lambda)^j}{j!}
=e^{-\lambda}\frac{(p\lambda)^k}{k!}e^{\lambda-p\lambda}
=e^{-(p\lambda)}\frac{(p\lambda)^k}{k!},
\end{multline*}
as claimed.
\end{proof}

Now to prove Theorem \ref{t.main}, we are interested in calculating $C_{\cH,\cE}$ from \eqref{eq.CHE} when $\cE$ is an event $\OO_{j,k}$ as defined in \eqref{eq.Ojk}.  To that end, we define probabilities
\begin{equation*}
    Q^\T_{j,k}=\Pr\brac{\OO_{j,k} \mid H\text{ is true}}\quad
    Q^\F_{j,k}=\Pr\brac{\OO_{j,k} \mid H\text{ is false}}
\end{equation*}
With these definitions, we have that
\[
C_{\cH,\OO_{j,k}}=\frac{Q^\T_{j,k}}{Q^\F_{j,k}}.
\]
In particular, we have an improvable evidence paradox; namely, that
    \begin{equation}\label{paradox}
\Pr(H\text{ true}|\OO_{j+1,k})<\Pr(H\text{ true}|\OO_{j,k})
\end{equation}
whenever 
\begin{equation}
\frac{Q^\T_{j+1,k}}{Q^\F_{j+1,k}}< \frac{Q^\T_{j,k}}{Q^\F_{j,k}}.
\end{equation}



To compute these probabilities, we integrate over the value $I$ of the gamma-distributed interest-level $\cI$.  In our model, conditioning on the value $I$, the number of weak and strong experiments that are attempted for the hypothesis are drawn independently from Poisson distributions of mean $c_w I$ and $I$, respectively.  Thus by Lemma \ref{l.Posflips}, for each fixed value of $I$, the number of \emph{successful} strong and weak experiments each are also Poisson-distributed, with means depending on the truth of the hypothesis and the parameters $\alpha_w,$ $\gamma_w,$ $\alpha_S$, and $\gamma_S$ (as well as $I$ and $c_w$).  For example, if the hypothesis is true, then for interest-level $\cI=I$, the number of successful strong experiments is Poisson distributed with mean $\gamma_S\cdot I$, and the number of successful weak experiments is Poisson distributed with mean $c_w\alpha_w I$---and, by assumption, the draws from these distributions are independent. We thus have that
\begin{align}
 Q^\F_{j,k}&=\int_{I=0}^\infty
    \brac{\frac{\beta^\kappa}{\Gamma(\kappa)}I^{\kappa-1}e^{-\beta I}}
    \brac{e^{-\alpha_SI} \frac{(\alpha_S I)^k}{k!}}
    \brac{e^{-\alpha_wc_w I}\frac{(\alpha_wc_wI)^j}{j!}}
dI\label{falseksuccess}\\
Q^\T_{j,k}&=\int_{I=0}^\infty\brac{\frac{\beta^\kappa}{\Gamma(\kappa)}I^{\kappa-1}e^{-\beta I}}
\brac{e^{-\gamma_S I} \frac{(\gamma_S I)^k}{k!}}
\brac{e^{-\gamma_w c_w I} \frac{(\gamma_w c_w I)^j}{j!}}
dI,\label{trueksuccess}
\end{align}
which we rewrite as
\begin{align*}
 Q^\F_{j,k}&=\frac{\beta^\kappa}{\Gamma(\kappa)}\frac{\alpha_S^k(c_w \alpha_w)^j}{j!k!}\int_{I=0}^\infty
    e^{-(\beta+\alpha_S+c_w \alpha_w)I}
    I^{j+k+\kappa-1}
dI\\
Q^\T_{j,k}&=\frac{\beta^\kappa}{\Gamma(\kappa)}\frac{\gamma_S^k(c_w \gamma_w)^j}{j!k!}\int_{I=0}^\infty e^{-(\beta+\gamma_S+c_w \gamma_w)I}I^{j+k+\kappa-1}dI
\end{align*}
In particular, using that 
\begin{equation}\label{eq:gammaint}
\int_{0}^\infty e^{-a x}x^bdx=\frac{\Gamma(b+1)}{a^{b+1}} \quad \text{for real $a>0$ and $b>-1$,}
\end{equation}
we have
\begin{align*}
 Q^\F_{j,k}
&=\frac{\beta^\kappa\Gamma(j+k+\kappa)}{\Gamma(\kappa)\Gamma(j+1)\Gamma(k+1)}\frac{c_w^j\alpha_w^j\alpha_S^k}{(\beta+\alpha_S+c_w\alpha_w)^{j+k+\kappa}}\\
Q^\T_{j,k}
&=\frac{\beta^\kappa\Gamma(j+k+\kappa)}{\Gamma(\kappa)\Gamma(j+1)\Gamma(k+1)}\frac{c_w^j\gamma_w^j\gamma_S^k}{(\beta+\gamma_S+c_w\gamma_w)^{j+k+\kappa}}
\end{align*}

In particular, 
\begin{equation}
\frac{Q^\T_{j,k}}{Q^\F_{j,k}}=
\brac{\frac{\gamma_w}{\alpha_w}}^j
\brac{\frac{\gamma_S}{\alpha_S}}^k
\left(\frac{\beta+\alpha_S+c_w\alpha_w}{\beta+\gamma_S+c_w\gamma_w}\right)^{j+k+\kappa}
\end{equation}
giving an improvable evidence paradox as in \eqref{paradox} whenever 
\[
\alpha_w(\beta+\gamma_S+c_w\gamma_w)>\gamma_w(\beta+\alpha_S+c_w\alpha_w)
\]
and so whenever
\begin{equation}
\beta<\frac{\gamma_S\alpha_w-\gamma_w \alpha_S}{\gamma_w-\alpha_w}.
\end{equation}
This completes the proof of the Theorem.\qed

Separately from the case of when improvable evidence is actually paradoxical, we can explore the general form of its positive predictive value by computing the ratio 
\begin{equation}
C_{\cH,\OO_{\geq 1,0}}=\frac{Q^\F_{\geq 1,0}}{Q^\T_{\geq 1,0}}.
\end{equation}
To this end, note that we have 
\begin{align*}
    Q^\F_{\geq 0,k}
&=\frac{\beta^\kappa\Gamma(k+\kappa)}{\Gamma(\kappa)\Gamma(k+1)}\frac{\alpha_S^k}{(\beta+\alpha_S)^{k+\kappa}}
&&
    Q^\F_{0,k}
&=\frac{\beta^\kappa\Gamma(k+\kappa)}{\Gamma(\kappa)\Gamma(k+1)}\frac{\alpha_S^k}{(\beta+\alpha_S+c_w\alpha_w)^{k+\kappa}}\\
Q^\T_{\geq 0 ,k}
&=\frac{\beta^\kappa\Gamma(k+\kappa)}{\Gamma(\kappa)\Gamma(k+1)}\frac{\gamma_S^k}{(\beta+\gamma_S)^{k+\kappa}}
&&
Q^\T_{0 ,k}
&=\frac{\beta^\kappa\Gamma(k+\kappa)}{\Gamma(\kappa)\Gamma(k+1)}\frac{\gamma_S^k}{(\beta+\gamma_S+c_w\alpha_w)^{k+\kappa}}
\end{align*}
\noindent so that
\begin{align*}
    Q^\F_{\geq 1, 0}
&=\beta^\kappa\brac{\frac{1}{(\beta+\alpha_S)^\kappa}-\frac{1}{(\beta+\alpha_S+c_w\alpha_w)^\kappa}}\\
Q^\T_{\geq 1, 0}
&=\beta^\kappa\brac{\frac{1}{(\beta+\gamma_S)^\kappa}-\frac{1}{(\beta+\gamma_S+c_w\gamma_w)^\kappa}},
\end{align*}
\noindent giving
\begin{align*}
Q^\F_{\geq 1,\geq 0}&=1-\brac{\frac{\beta}{\beta+c_w\alpha_w}}^\kappa=1-\brac{1-\frac{c_w\alpha_w}{\beta+c_w\alpha_w}}^\kappa
\\    
Q^\T_{\geq 1,\geq 0}&=1-\brac{\frac{\beta}{\beta+c_w\gamma_w}}^\kappa=1-\brac{1-\frac{c_w\gamma_w}{\beta+c_w\gamma_w}}^\kappa.
\end{align*}
In particular, we have for $\kappa=1$ that
\begin{equation}
    \frac{Q^\T_{\geq 1,0}}{Q^\F_{\geq 1,0}}
    =\frac{\gamma_w}{\alpha_w}\frac{(\beta+\alpha_S)(\beta+\alpha_S+c_w\alpha_w)}{(\beta+\gamma_S)(\beta+\gamma_S+c_w\gamma_w)}.
\end{equation}
In general, $C_{\cH,\OO_{\geq 1,0}}$ (and thus, the positive predictive value of observing $\OO_{\geq 1,0}$) decreases as $\beta$ decreases, and this effect is particularly critical when the ratio $\gamma_w/\alpha_w$ is not large to begin with.

\bigskip
Theorem \ref{t.uniform} follows easily from a similar calculation to the one above.  Writing $\Gamma_\ell(b,a)$ for the lower incomplete gamma function
\[
\Gamma_\ell(b,a)=\int_{x=0}^a e^{-x}x^b dx,
\]
we have in the setting of Theorem \ref{t.uniform} that 
\[
\frac{Q^\T_{j,k}}{Q^\F_{j,k}}=\brac{\frac{\gamma_w}{\alpha_w}}^j\brac{\frac{\gamma_S}{\alpha_S}}^k\brac{\frac{\alpha_S+c_w\alpha_w}{\gamma_S+c_w\gamma_w}}^{j+k+1}\frac{\Gamma_\ell(j+k+1,C(\gamma_S+c_w\gamma_w))}{\Gamma_\ell(j+k+1,C(\alpha_S+c_w\alpha_w))},
\]
giving that $\Pr(\cH\mid \OO_{j+1,k})<\Pr(\cH\mid \OO_{j,k})$ whenever
\begin{equation}\label{eq.unicondition}
\frac{\gamma_w}{\alpha_w}<\frac{\gamma_S+c_w\gamma_w}{\alpha_S+c_w\alpha_w}\frac{\Gamma_\ell(j+k+2,C(\alpha_S+c_w\alpha_w))\Gamma_\ell(j+k+1,C(\gamma_S+c_w\gamma_w))}{\Gamma_\ell(j+k+1,C(\alpha_S+c_w\alpha_w))\Gamma_\ell(j+k+2,C(\gamma_S+c_w\gamma_w))}.
\end{equation}
This inequality will be satisfied for sufficiently large $C$, since $\frac{\gamma_S}{\alpha_S}>\frac{\gamma_w}{\alpha_w}$ implies that 
\[
\frac{\gamma_S+c_w\gamma_w}{\alpha_S+c_w\alpha_w}>\frac{\gamma_w}{\alpha_w},
\]
and by definition we have that 
\[
\lim_{\lambda\to \infty}\Gamma_\ell(m,\lambda)=\Gamma(m),
\]
ensuring that by making $C$ large (for fixed $j+k$, $\alpha_w,\gamma_w,\alpha_S,\gamma_S$, and $c_w$) we can make the ratio of lower incomplete Gamma functions in \eqref{eq.unicondition} arbitrarily close to 1.
\qed

\subsubsection*{Acknowledgment} We thank Ben Recht for helpful comments on a draft.

\bibliographystyle{plain} 
\bibliography{evidence}

\appendix

\section{The homogeneous experiment setting}\label{s.homogeneous}

In this appendix we demonstrate that an improvable evidence paradox can also occur even when we assume that only studies of a single type and power are being conducted of a hypothesis, when the result of such a study is not just success or failure but a $p$-value (which we'll denote by $\rho$) capturing the level of statistical certainty which was achieved in the study.

Roughly, when such a study is capable of producing better $p$-values than were actually obtained, the fact that we observed weaker $p$-values rather than stronger ones can already give rise to the paradox.

In particular, in the model for our result in this section, we assume that:
\begin{itemize}
    \item The possible experimental outcomes are either failure, or success returning a value $\rho\in (0,\alpha]$.
    \item When the hypothesis is false, the probability that $\rho<x$ is given by $a(x)$, for a function satisfying $a(x)\leq x$ (so $\rho$ is a valid $p$-value) and $a(\alpha)=\alpha$.
    \item When the hypothesis is true, the probability of success with $\rho<x$ is given by $\gamma(x)$, for a function $\gamma(x)$ capturing the power of the experiment to achieve a significance level $<x$.
     \end{itemize}
We assume the number of experiments attempted is distributed as $\Pos(\cI)$ for a gamma-distributed interest-level $\cI$ (in particular, the number of experiments follows a negative binomial distribution).

Now for $p<\alpha$ we consider the events 
\[
\OO^{j,p}=\text{Observe exactly $j$ successes, all with $\rho\geq p$}.
\]
We have an improvable evidence paradox in this framework when for all $j\geq 0$,
\[
\Pr(\cH\mid \OO^{j+1,p})<\Pr(\cH\mid \OO^{j,p}).
\]
\begin{theorem}\label{t.homogeneous} In the homogeneous observation framework as defined above, we have an improvable evidence paradox if 
\begin{equation}
\frac{\alpha-p}{\gamma(\alpha)-\gamma(p)}>\frac{\beta+\alpha}{\beta+\gamma(\alpha)},
\end{equation}
or equivalently if
\begin{equation}
    \beta<\frac{\alpha\gamma(p)-p\gamma(\alpha)}{(\gamma(\alpha)-\gamma(p))-(\alpha-p)}.
\end{equation}
In particular, any time
\begin{equation}\label{eq.homocondition}
\frac{\gamma(p)}{p}>\frac{\gamma(\alpha)}{\alpha}>1,
\end{equation}
there is a sufficiently small $\beta>0$ giving rise to an improvable evidence paradox.
\end{theorem}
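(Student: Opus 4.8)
The plan is to mirror the proof of Theorem~\ref{t.main}, replacing the weak/strong split with a fine/coarse split on the reported $p$-value. First I would classify the outcome of each experiment into three mutually exclusive categories: failure; success with $\rho<p$; and success with $\rho\in[p,\alpha]$. From the definitions of $a(\cdot)$ and $\gamma(\cdot)$ (and using $a(\alpha)=\alpha$), a single experiment falls in the last two categories with probabilities $a(p)$ and $\alpha-a(p)$ if $H$ is false, and with probabilities $\gamma(p)$ and $\gamma(\alpha)-\gamma(p)$ if $H$ is true. By the same Poisson-thinning argument as in Lemma~\ref{l.Posflips} (applied to each category, with independence across categories being the standard Poisson-splitting fact), conditioned on $\cI=I$ the numbers of experiments in the three categories are independent Poissons with means $I$ times the respective probabilities. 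The event $\OO^{j,p}$ is then precisely ``exactly $j$ experiments in the third category \emph{and} exactly $0$ in the second''.

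Next I would compute $Q^\T_{j,p}:=\Pr(\OO^{j,p}\mid H\text{ true})$ and $Q^\F_{j,p}:=\Pr(\OO^{j,p}\mid H\text{ false})$ by integrating against the $\Gam(\kappa,\beta)$ density of $\cI$, just as in \eqref{trueksuccess}--\eqref{falseksuccess}. The useful simplification is that the Poisson weight $e^{-(\gamma(\alpha)-\gamma(p))I}$ from the third category multiplies the ``zero successes'' weight $e^{-\gamma(p)I}$ from the second to give a single exponential $e^{-\gamma(\alpha)I}$ (and $e^{-\alpha I}$ in the false case, again using $a(\alpha)=\alpha$), so each integral collapses to $\int_0^\infty I^{j+\kappa-1}e^{-(\beta+\gamma(\alpha))I}\,dI$ (resp.\ with $\alpha$ for $\gamma(\alpha)$), which \eqref{eq:gammaint} evaluates. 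Forming the ratio, the $\Gamma$-factors and $j!$ cancel and leave
\[
\frac{Q^\T_{j,p}}{Q^\F_{j,p}}=\left(\frac{\gamma(\alpha)-\gamma(p)}{\alpha-a(p)}\right)^{j}\left(\frac{\beta+\alpha}{\beta+\gamma(\alpha)}\right)^{j+\kappa},
\]
so, exactly as in the proof of Theorem~\ref{t.main}, the likelihood ratio $C_{\cH,\OO^{j,p}}=Q^\T_{j,p}/Q^\F_{j,p}$ is decreasing in $j$ --- i.e.\ an improvable evidence paradox holds --- iff the base of the $j$-th power is less than $1$.

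Finally, since $a(p)\le p$ gives $\alpha-a(p)\ge\alpha-p$, it suffices to have $\tfrac{\gamma(\alpha)-\gamma(p)}{\alpha-p}\cdot\tfrac{\beta+\alpha}{\beta+\gamma(\alpha)}<1$; rearranging (noting $\gamma$ is increasing, so $\gamma(\alpha)-\gamma(p)>0$) yields the two equivalent forms in the statement, and letting $\beta\to0^+$ the condition degenerates to $\tfrac{\alpha-p}{\gamma(\alpha)-\gamma(p)}>\tfrac{\alpha}{\gamma(\alpha)}$, i.e.\ $\gamma(p)/p>\gamma(\alpha)/\alpha$, so under \eqref{eq.homocondition} it persists for all sufficiently small $\beta>0$ by continuity. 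I do not expect a genuine obstacle; the only points requiring care are the three-way Poisson splitting and the telescoping of the two exponential factors into one --- together with the observation that, because the detailed shape of $a$ (beyond $a(p)\le p$ and $a(\alpha)=\alpha$) and the shape parameter $\kappa$ both drop out, the resulting criterion is clean.
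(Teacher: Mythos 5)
Your proposal is correct and follows essentially the same route as the paper: integrate the Poisson probabilities of exactly $j$ successes with $\rho\in[p,\alpha]$ and zero with $\rho<p$ against the $\Gam(\kappa,\beta)$ prior, evaluate via \eqref{eq:gammaint}, and observe that the likelihood ratio decays geometrically in $j$ precisely under the stated condition (using $a(\alpha)=\alpha$ and $a(p)\le p$). Your explicit three-way Poisson-splitting justification of the integrand is a slightly more careful presentation of a step the paper writes down directly, but it is the same argument.
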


To understand the applicability of the result, we show \eqref{eq.homocondition} is satisfied for the example of studies based on statistical tests using tails of normal distributions.  In particular, letting 
\[
\Phi(x)=\frac{1}{\sqrt{2\pi}}\int_{-\infty}^xe^{-t^2/2}dt.
\]
be the cumulative distribution function of the normal distribution, we have:
\begin{theorem}\label{t.halfpower}
If we have for some constant $C$ that
\begin{equation}\label{eq.gammaform}
\gamma(x)=\frac{1}{\sqrt{2\pi}}\int_{t=\Phi^{-1}(1-x)-C}^\infty e^{-t^2/2}dt,
\end{equation}
then we have for any $p<\alpha$ that $p/\alpha<\gamma(p)/\gamma(\alpha)$---that is, \eqref{eq.homocondition} is satisfied.
\end{theorem}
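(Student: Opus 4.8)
The plan is to reduce everything to the classical monotonicity of the Mills ratio of the normal distribution. Write $\bar\Phi = 1-\Phi$ for the upper tail, so that $\Phi^{-1}(1-x) = \bar\Phi^{-1}(x)$ and the hypothesis \eqref{eq.gammaform} reads $\gamma(x) = \bar\Phi(\bar\Phi^{-1}(x) - C)$. We may assume $C>0$: this is the only regime in which \eqref{eq.homocondition} can hold, since for $C\le 0$ one has $\bar\Phi^{-1}(x)-C \ge \bar\Phi^{-1}(x)$ and hence, $\bar\Phi$ being decreasing, $\gamma(x)\le x$, contradicting $\gamma(\alpha)/\alpha>1$. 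For the same reason, with $C>0$ we get $\gamma(x) > x$ for every $x\in(0,1)$, which already establishes the strict inequality $\gamma(\alpha)/\alpha>1$. It thus remains to prove $\gamma(p)/p > \gamma(\alpha)/\alpha$ for $p<\alpha$; cross-multiplying by the positive quantities involved, this is exactly the asserted $p/\alpha < \gamma(p)/\gamma(\alpha)$. So it suffices to show that $x\mapsto \gamma(x)/x$ is strictly decreasing on $(0,1)$, which also reproves \eqref{eq.homocondition} in full.

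First I would substitute $x = \bar\Phi(z)$, a strictly decreasing bijection from $\R$ (in the variable $z$) onto $(0,1)$. Under it, $\gamma(x)/x = \bar\Phi(z-C)/\bar\Phi(z) =: h(z)$, and because the substitution reverses order, $\gamma(x)/x$ being strictly decreasing in $x$ is equivalent to $h(z)$ being strictly increasing in $z$. Keeping this order reversal straight is the one bookkeeping point where it is easy to slip.

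Next I would differentiate. With $\phi$ the standard normal density and $\bar\Phi' = -\phi$, the sign of $h'(z)$ equals the sign of $\phi(z)\bar\Phi(z-C) - \phi(z-C)\bar\Phi(z)$, which is positive precisely when the Mills ratio $R(w) := \bar\Phi(w)/\phi(w)$ satisfies $R(z-C) > R(z)$. Since $C>0$, this holds for all $z$ as soon as $R$ is strictly decreasing on all of $\R$. That last fact I would prove from scratch: using $\phi'(w) = -w\phi(w)$ gives $R'(w) = wR(w)-1$, so it is enough to check $w\bar\Phi(w) < \phi(w)$ for every $w$. This is immediate for $w\le 0$ (the left side is $\le 0$), and for $w>0$ it follows from $\bar\Phi(w) = \int_w^\infty \phi(t)\,dt = \int_w^\infty t^{-1}\,t\phi(t)\,dt < w^{-1}\int_w^\infty t\phi(t)\,dt = \phi(w)/w$. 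Hence $R'(w)<0$ everywhere, $h$ is strictly increasing, and the theorem follows.

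I do not anticipate a genuine obstacle: once one recognizes that the statement is governed entirely by monotonicity of the Mills ratio (equivalently, log-concavity of $\bar\Phi$, via $(\log\bar\Phi)' = -1/R$), the argument is short and elementary. The only items needing care are (i) the direction of the inequality through the order-reversing substitution $x=\bar\Phi(z)$, and (ii) recording the harmless but necessary assumption $C>0$, without which the strict form of \eqref{eq.homocondition} cannot hold.
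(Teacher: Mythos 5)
Your proof is correct and follows essentially the same route as the paper's: both reduce the claim to showing that $\tau\mapsto\bar\Phi(\tau-C)/\bar\Phi(\tau)$ is increasing (after the order-reversing substitution $\tau=\Phi^{-1}(1-x)$), both reduce that to monotonicity of the Mills ratio (the paper phrases it as $h'/h=-\phi/\bar\Phi$ being decreasing, which is the same statement), and both prove it via the identical bound $\int_\tau^\infty e^{-t^2/2}\,dt<\tau^{-1}e^{-\tau^2/2}$. Your explicit remark that $C>0$ must be assumed is a worthwhile addition: the paper leaves this implicit (it is supplied only by the motivating $T$-test derivation, where $C=E'/(\sigma_d/\sqrt{n})>0$), yet the step comparing the logarithmic derivative at $\tau-C$ and at $\tau$ silently relies on it, and the strict conclusion fails for $C\le 0$.
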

We can motivate the form of $\gamma$ in \eqref{eq.gammaform} using the example a one-sided $T$-test  testing for the effect of an intervention on an outcome that is measured before $b_i$ and after $a_i$ the intervention across $n$ samples. The null hypothesis is that the differences $d_i=a_i-b_i$ follow a normal distribution with mean $\mu_d=0$ and the alternative hypothesis 
is that they follow a normal distribution with mean $\mu_d\geq E$. Denoting  $\bar{d}$ and $\sigma_d$ to be the population mean  and standard deviation respectively the test statistic is
\begin{equation}
    T=\frac{\bar{d}}{\sigma_d/\sqrt{n}}
\end{equation}

Under the null hypothesis with $n$ large $T$ is well approximated by the standard normal distribution. We can compute the rejection threshold corresponding to $x$ as $T_x=\Phi^{-1}(1-x)$ for the cumulative distribution function
\[
\Phi(x)=\frac{1}{\sqrt{2\pi}}\int_{-\infty}^xe^{-t^2/2}dt.
\]
If the alternative hypothesis is true we have $\mu_d=E'$ for some $E'\geq E$ and thus that
\begin{align*}
\gamma(x) &=\Pr(T\geq T_x\mid \mu_d=E')\\
&=\Pr\brac{\frac{\bar{d}}{\sigma_d/\sqrt{n}}>T_x\middle| \mu_d=E'}\\
&=\Pr\brac{\frac{\bar{d}-E'}{\sigma_d/\sqrt{n}}>T_x-\frac{E'}{\sigma_d/\sqrt{n}}\middle|\mu_d=E'}\\
&=1-\Pr\brac{\frac{\bar{d}-E'}{\sigma_d/\sqrt{n}}\leq T_x-\frac{E'}{\sigma_d/\sqrt{n}}\middle|\mu_d=E'}
\end{align*}
The quantity $\frac{E'}{(\sigma_d)/\sqrt{n}}$ doesn't depend on $x$, so it is some constant $C$ as in \eqref{eq.gammaform}, and for large $n$, we have that $\frac{\bar{d}-\mu_d}{(\sigma_d)/\sqrt{n}}$ well-approximated by a standard normal distribution; thus $\gamma(x)$ is well approximated by the form in \eqref{eq.gammaform}.

\bigskip

\begin{proof}[Proof of Theorem \ref{t.homogeneous}]
We define the probabilities
\begin{equation*}
    R^\T_{j,p,\alpha}=\Pr\brac{\OO^{j,p} \mid H_i\text{ is true}}\quad
    R^\F_{j,p,\alpha}=\Pr\brac{\OO^{j,p} \mid H_i\text{ is false}}
\end{equation*}
With these definitions, and writing $P_H=\Pr(H_i\text{ true})$, we have have an improvable evidence paradox; namely, that
\begin{equation}\label{eq.homparadox}
\Pr(H_i\text{ true}|\OO^{j+1,p})<\Pr(H_i\text{ true}|\OO^{j,p})
\end{equation}
whenever 
\begin{equation}
\frac{R^\F_{j+1,k}}{R^\T_{j+1,k}}> \frac{R^\F_{j,p,\alpha}}{R^\T_{j,p,\alpha}}.
\end{equation}


We now let $I$ be gamma distributed with shape $\kappa=\mu^2/\sigma^2>0$ and rate $\beta$ and suppose the number of experiments conducted is then independently chosen from a Poisson distribution of mean $I$, where $c_S>0$ is a constant. 

We have
\begin{align}
 R^\F_{j,p,\alpha}&=\int_{I=0}^\infty
    \frac{\beta^\kappa}{\Gamma(\kappa)}I^{\kappa-1}e^{-\beta I}
    e^{-(a(\alpha)-a(p)) I}\frac{(a(\alpha)-a(p))^jI^j}{j!}e^{-a(p) I}
dI\\
R^\T_{j,p,\alpha}&=\int_{I=0}^\infty   \frac{\beta^\kappa}{\Gamma(\kappa)}I^{\kappa-1}e^{-\beta I}
    e^{-(\gamma(\alpha)-\gamma(p)) I}\frac{(\gamma(\alpha)-\gamma(p))^jI^j}{j!}e^{-\gamma(p) I}dI,
\end{align}
which we rewrite as
\begin{align*}
 R^\F_{j,p,\alpha}&=\frac{\beta^\kappa}{\Gamma(\kappa)}\frac{(a(\alpha)-a(p))^j}{j!}\int_{I=0}^\infty
    e^{-(\beta+a(\alpha))I}
    I^{j+\kappa-1}
dI\\
R^\T_{j,p,\alpha}&=\frac{\beta^\kappa}{\Gamma(\kappa)}\frac{(\gamma(\alpha)-\gamma(p))^j}{j!}\int_{I=0}^\infty e^{-(\beta+\gamma(\alpha))I}I^{j+\kappa-1}dI
\end{align*}
In particular, from \eqref{eq:gammaint} we have
\begin{align*}
 R^\F_{j,p,\alpha}
&=\frac{\beta^\kappa\Gamma(j+\kappa)}{\Gamma(\kappa)\Gamma(j+1)}\frac{(a(\alpha)-a(p))^j}{(\beta+a(\alpha))^{j+\kappa}}\\
R^\T_{j,p,\alpha}
&=\frac{\beta^\kappa\Gamma(j+\kappa)}{\Gamma(\kappa)\Gamma(j+1)}\frac{(\gamma(\alpha)-\gamma(p))^j}{(\beta+\gamma(\alpha))^{j+\kappa}}
\end{align*}

In particular, 
\begin{equation}
\frac{R^\F_{j,p,\alpha}}{R^\T_{j,p,\alpha}}=\brac{\frac{a(\alpha)-a(p)}{\gamma(\alpha)-\gamma(p)}}^j\left(\frac{\beta+\gamma(\alpha)}{\beta+a(\alpha)}\right)^{j+\kappa}
\end{equation}
giving an improvable evidence paradox as in \eqref{paradox} whenever 
\[
\frac{a(\alpha)-a(p)}{\gamma(\alpha)-\gamma(p)}>\frac{\beta+a(\alpha)}{\beta+\gamma(\alpha)},
\]
which justifies the condition claimed in the Theorem since $a(\alpha)=\alpha$ and $a(p)\leq p$.
\end{proof}

We conclude by proving Theorem \ref{t.halfpower}.

\begin{proof}[Proof of Theorem \ref{t.halfpower}]
To satisfy \eqref{eq.homocondition}, we are interested in showing that
\begin{equation}\label{eq.intratios}
\frac{\gamma(x)}{x}=\frac{\displaystyle \int_{t=T_x-C}^\infty e^{-t^2/2}dt}
                         {\displaystyle 1-\int_{t=-\infty}^{T_x} e^{-t^2/2}dt}
                   =\frac{\displaystyle \int_{t=T_x-C}^\infty e^{-t^2/2}dt}
                         {\displaystyle \int_{t=T_x}^\infty e^{-t^2/2}dt}
\end{equation}
is a decreasing function of $x$ in the range $x\leq \alpha$.  Writing 
\[
g(y):=\int_{t=y-C}^\infty e^{-t^2/2}dt\quad\text{and}\quad h(y):=\int_{t=y}^\infty e^{-t^2/2}dt,
\]
it suffices to show that $f(\tau)=g(\tau)/h(\tau)$  is an increasing function of $\tau$ in the range $\tau\geq T_\alpha$, since $\tau=T_x$ is a decreasing function of $x$.
Differentiating wih respect to $\tau$, we have that
\begin{equation}
f'(\tau)=\frac{g(\tau)}{h(\tau)}\brac{\frac{g'(\tau)}{g(\tau)}-\frac{h'(\tau)}{h(\tau)}},
\end{equation}
and we thus have that $f'(\tau)>0$ whenever
\begin{equation}\label{eq.gcondition}
     e^{-\tau^2/2}\int_{t=\tau-C}^\infty e^{-t^2/2}dt>e^{-(\tau-C)^2/2}\int_{t=\tau}^\infty e^{-t^2/2}dt.
\end{equation}
Thus it suffices for $h'(\tau)/h(\tau)$ to be decreasing in $\tau$.  Differentiating again, it suffices for this that 
\[
\frac{h''(\tau)}{h'(\tau)}<\frac{h'(\tau)}{h(\tau)}.
\]
We have
\[
h(\tau)=\int_{t=\tau}^\infty e^{-t^2/2}dt\quad\quad h'(\tau)=-e^{-\tau^2/2}\quad\quad h''(\tau)=\tau e^{-\tau^2/2}.
\]
Thus it suffices to have
\[
\tau \int_{t=\tau}^\infty e^{-t^2/2}<e^{-\tau^2/2}.
\]
This is evidently true when $\tau\leq 0$, but also true for all $\tau>0$, since we have
\[
\int_{t=\tau}^\infty e^{-t^2/2}dt<\frac 1 \tau \int_{t=\tau}^\infty te^{-t^2/2}dt=\frac 1 \tau \int_{x=\tau^2/2}^\infty e^{-x}dx=\frac 1 \tau e^{-\tau^2/2}.
\]
\end{proof}

\end{document}